\newcommand{\n}[1]{\mathsf{#1}}
\newtheorem{theorem}{Theorem}
\newtheorem{lemma}{Lemma}
\begin{document}
\title{A fixed point theorem for COFEs}
\author{Stephen Dolan}
\email{stedolan@stedolan.net}
\maketitle

\section{Introduction}

A COFE (\emph{complete ordered family of equivalences}) is a set
equipped with an approximate equality $(≡_n)$ (where $a ≡_n b$ is read
as ``$a$ and $b$ are equal for $n$ steps'') and the ability to take
certain limits.  These were introduced by Di Gianantonio and
Miculan~\cite{cofes} to model recursive definitions, and have become a
standard tool in step-indexed logics such as Iris~\cite{iris}.

The main theorem that makes COFEs useful for modelling recursive
definitions is the following fixed-point theorem:  If $f$ is a
function from a COFE to itself which is \emph{contractive} in that:
\[
a \,≡_n\, b \quad⟶\quad f(a) \,≡_{n+1}\, f(b)
\]
then $f$ has a unique fixed point.

The purpose of this note is to introduce and prove a stronger
fixed-point principle for COFEs: to find a unique fixed point of $f$,
it is sufficient that $f$ be \emph{contractive on fixed points}:
\[
a \,≡_n\, b \,≡_n\, f(a) \,≡_n\, f(b) \quad⟶\quad f(a) \,≡_{n+1}\, f(b)
\]
That is, when showing that $f(a) ≡_{n+1} f(b)$, the user of this
fixed-point principle may assume not only that $a ≡_n b$, but also
that $a$ and $b$ are already \emph{partial fixed points}:
\[
a \,≡_n\, f(a) \qquad\qquad \,b ≡_n\, f(b)
\]
This stronger principle, based on a result by Fisher~\cite{fisher},
allows unique fixed points to be found for complex nested recursive
definitions that cannot be handled by contractivity alone.

\section{Background}

An \emph{ordered family of equivalences} (OFE) consists of a set $A$ and a sequence of equivalence relations $≡_n$ ($n ∈ ℕ$) on $A$, where:
\begin{itemize}
\item $≡₀$ is total: $∀a,\, b.\; a ≡₀ b$
\item The equivalence relations $≡_n$ are finer for larger $n$:
\[
(≡₀) ⊇ (≡₁) ⊇ (≡₂) ⊇ \dots
\]
\item Their intersection is equality: $(∀n.\; a ≡_n b) ⟶ a = b$
\end{itemize}

A sequence of elements $x_n$ is \emph{coherent} if $x_n ≡_n x_{n+1}$
for all $n$, and it has \emph{limit} $a$ if $x_n ≡_n a$ for all
$n$. Limits are unique when they exist, and a \emph{complete} OFE (or
COFE) is an OFE in which all coherent sequences have a (necessarily
unique) limit.

\paragraph*{OFEs as metric spaces}

An OFE can be viewed as a certain kind of metric space, by defining
the distance function $d$ as follows:
\[
d(a, b) = \inf \{ 2^{-n} \mid n ∈ ℕ,\; a ≡_n b \}
\]
The resulting metric is a \emph{1-bounded bisected ultrametric}:
\begin{itemize}
\item 1-bounded, because the maximum value of $d$ is 1 since $a ≡₀ b$ always
\item bisected, because $d$ takes only values $2^{-n}$ and $0$
\item ultrametric, because it obeys a stronger version of the triangle inequality, with $+$ replaced with $\n{max}$
\end{itemize}

The reader familiar with metric spaces might have expected a different
definition of completeness, in terms of \emph{Cauchy sequences}. In
OFE notation, a sequence $x_n$ is \emph{Cauchy} if its elements are
eventually approximately equal:
\[
∀n\, ∃k\, ∀i ≥ k,\, j ≥ k.\; x_i ≡_n x_j 
\]
and it converges to $a$ if its elements are eventually approximately equal to $a$:
\[
∀n\, ∃k\, ∀i ≥ k.\; x_i ≡_n a
\]
Some authors define COFEs as OFEs in which all Cauchy sequences
converge. This definition is equivalent to the simpler definition
above, because all coherent sequences are Cauchy and all Cauchy
sequences contain a coherent subsequence, and their limits agree.


\paragraph*{Contractive functions}

A function $f : A → A$ on a COFE $A$ is said to be \emph{contractive}
if:
\[
a ≡_n b ⟶ f(a) ≡_{n+1} f(b)
\]
The fixed point principle for COFEs introduced by Di Gianantonio and
Miculan~\cite{cofes} states that such functions have unique fixed
points, and moreover that this fixed point can be obtained by
iteration from an arbitrary starting point.
From the point of view of COFEs as metric spaces, this principle is a
restatement of Banach's fixed point theorem (from which we get the
term ``contractive'').

\begin{theorem}[Di Gianantonio and Miculan, Banach] If $f$ is contractive on an inhabited COFE, then $f$ has a unique fixed point, which is the limit of the iterates $f^n(x)$ starting from an arbitrary point $x$.
\end{theorem}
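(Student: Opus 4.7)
The plan is to mimic the standard Banach fixed point argument, with coherent sequences playing the role that Cauchy sequences would in a general metric space. Starting from any basepoint $x \in A$ (available by inhabitation), I would form the sequence of iterates $x_n = f^n(x)$ and establish the theorem in three steps: coherence of the iterate sequence, existence of a fixed point as its limit, and uniqueness of any fixed point.

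For coherence, I would argue by induction on $n$ that $x_n \equiv_n x_{n+1}$. The base case is trivial because $\equiv_0$ is total, and the inductive step is a single application of contractivity: from $x_n \equiv_n x_{n+1}$ we get $f(x_n) \equiv_{n+1} f(x_{n+1})$, which is exactly $x_{n+1} \equiv_{n+1} x_{n+2}$. Completeness of $A$ then delivers a limit $a$ with $x_n \equiv_n a$ for every $n$.

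To see $a$ is a fixed point, the key is to verify $f(a) \equiv_n a$ for every $n$; the intersection axiom then forces $f(a) = a$. The case $n = 0$ is immediate from totality of $\equiv_0$. For $n \geq 1$, I would use $a \equiv_{n-1} x_{n-1}$, apply contractivity to conclude $f(a) \equiv_n f(x_{n-1}) = x_n$, and chain with $x_n \equiv_n a$ by transitivity. Uniqueness is another one-line induction: if $a$ and $a'$ are both fixed points, then $a \equiv_n a'$ for all $n$, using totality at $n=0$ and, at the step, contractivity together with $a = f(a)$ and $a' = f(a')$ to promote $a \equiv_n a'$ into $a \equiv_{n+1} a'$.

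I do not anticipate any real obstacles; each of the three inductions is essentially a single line, and no choice of basepoint matters because uniqueness is proved independently. The only mild point to be careful about is working directly with coherent sequences, as stipulated by the completeness condition in the definition, rather than taking an unnecessary detour through Cauchy sequences.
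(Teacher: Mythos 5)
Your proof is correct and is the standard argument: the paper states this theorem without proof, but your three-step structure (coherence of the iterates by induction, the limit is a fixed point by induction on $n$, uniqueness by induction on $n$) is exactly the skeleton the paper uses for its generalization in Theorem~\ref{theorem}, where each step just carries the extra side conditions needed for the weaker ``contractive on fixed points'' hypothesis. No gaps.
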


\section{A fixed point theorem}\label{sec:thm}

Say that a function $f : A → A$ is \emph{contractive on fixed points} if
\[
a ≡_n b ≡_n f(a) ≡_n f(b) ⟶ f(a) ≡_{n+1} f(b)
\]

All contractive functions are contractive on fixed points, but the
converse is not true (for an example, see \cref{sec:example}).
However, the fixed-point theorem still holds for functions merely
contractive on fixed points:

\begin{theorem}[cf. Fisher~\cite{fisher}]\label{theorem} If $f$ is contractive on fixed points on an inhabited COFE, then $f$ has a unique fixed point, which is the limit of the iterates $f^n(x)$ starting from an arbitrary point $x$.
\end{theorem}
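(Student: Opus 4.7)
The plan is to mimic the classical COFE fixed-point proof --- iterate $f$ from an arbitrary starting point, take a limit, and check that the limit is fixed and unique --- but with a strengthened coherence lemma, because the naive one fails. Pick $x \in A$ (using inhabitedness) and set $x_n = f^n(x)$. The three milestones are: show the sequence $x_n$ is coherent, so that by completeness it has a limit $a$; show $a \equiv_n f(a)$ for every $n$, so that $a = f(a)$ by the intersection axiom; and show any other fixed point equals $a$.

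The main obstacle is coherence. A direct induction on $n$ to prove $x_n \equiv_n x_{n+1}$ breaks down: to invoke contractivity on fixed points at level $n-1$ with the pair $(x_{n-1}, x_n)$, one of the three preconditions is $x_n \equiv_{n-1} x_{n+1}$, which is essentially what one is trying to prove. The fix is to strengthen the induction hypothesis and prove, by induction on $n$, that $x_n \equiv_n x_{n+k}$ for every $k \geq 0$. The base case $n = 0$ is immediate from totality of $\equiv_0$. In the step, for $k \geq 1$ we apply contractivity on fixed points at level $n-1$ to the pair $(x_{n-1}, x_{n+k-1})$: the preconditions $x_{n-1} \equiv_{n-1} x_{n+k-1}$ and $x_{n-1} \equiv_{n-1} f(x_{n-1}) = x_n$ are instances of the IH, and the third, $x_{n+k-1} \equiv_{n-1} f(x_{n+k-1}) = x_{n+k}$, follows by transitivity of $\equiv_{n-1}$ from the two IH instances $x_{n-1} \equiv_{n-1} x_{n+k-1}$ and $x_{n-1} \equiv_{n-1} x_{n+k}$. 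The conclusion $f(x_{n-1}) \equiv_n f(x_{n+k-1})$ is exactly $x_n \equiv_n x_{n+k}$.

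Given the resulting limit $a$ with $x_n \equiv_n a$ for all $n$, I would prove $a \equiv_n f(a)$ by induction on $n$. The base is trivial. For the step, the limit property and coherence give $a \equiv_n x_n$ and $x_n \equiv_n x_{n+1} = f(x_n)$, so it suffices to show $f(x_n) \equiv_n f(a)$. This follows by applying contractivity on fixed points at level $n-1$ to the pair $(x_n, a)$: the preconditions $x_n \equiv_{n-1} a$ and $x_n \equiv_{n-1} f(x_n) = x_{n+1}$ are immediate by downward closure, and $a \equiv_{n-1} f(a)$ is the IH. Chaining the three $\equiv_n$ steps yields $a \equiv_n f(a)$; taking the intersection over all $n$ gives $a = f(a)$.

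Uniqueness is an induction of the same shape but easier: for any fixed point $b$, prove $a \equiv_n b$ by induction on $n$, applying contractivity on fixed points at level $n-1$ to the pair $(a, b)$. The two partial-fixed-point preconditions hold trivially since $a$ and $b$ are genuine fixed points, and $a \equiv_{n-1} b$ is the IH. This simultaneously confirms that the limit produced by the iteration does not depend on the starting point $x$.
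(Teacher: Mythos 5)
Your proposal is correct and follows the same architecture as the paper's proof: establish coherence of the iterates by a strengthened induction, take the limit, show it is a fixed point by induction on $n$ using the c.f.p.\ property, and prove uniqueness by the same kind of induction. The only (inessential) difference is in the coherence step, where the paper strengthens the induction by proving $x \equiv_n f(x) \rightarrow f(x) \equiv_{n+1} f(f(x))$ for arbitrary $x$, while you instead generalise over the gap and prove $f^n(x) \equiv_n f^{n+k}(x)$ for all $k$; both resolve the same circularity.
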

\begin{proof}
First, show by induction on $n$ that, for arbitrary $x$:
\[
x ≡_n f(x) ⟶ f(x) ≡_{n+1} f(f(x))
\]
We use the c.f.p. property of $f$ and hence need that $x ≡_n f(x) ≡_n f(x) ≡_n f(f(x))$.
We already have $x ≡_n f(x)$, so need only show $f(x) ≡_n f(f(x))$.
When $n=0$, this holds by totality of $≡₀$.
When $n=n'+1$, we have $x ≡_{n'} f(x)$ and hence $f(x) ≡_{n'+1} f(f(x))$ by IH.

This property suffices to show that for arbitrary $x$, the sequence
$f^n(x)$ is coherent, so by completeness has limit $a$. We show that
it is a fixed point by proving $f(a) ≡_n a$ by induction on $n$, where
the $n=0$ case is trivial. For $n = n'+1$, by the IH and the convergence of $f^n(x)$ to $a$, we have:
\[
a ≡_{n'} f^{n'}(x) ≡_{n'} f(a) ≡_{n'} f^{n'+1}(x)
\]
and so $f(a) ≡_n f^{n}(x) ≡_n a$.

Finally, for uniqueness of the fixed point, we suppose that $f$ has
two fixed points $p, q$. We show that $p ≡_n q$ by induction on $n$,
relying on the c.f.p. property of $f$ to show that $p = f(p) ≡_{n'+1}
f(q) = q$ since $p ≡_{n'} q ≡_{n'} f(p) ≡_{n'} f(q)$.
\end{proof}

\section{Example}\label{sec:example}

Consider the following function, borrowed from Krstić and Matthews~\cite{krstic2003inductive}:
\[
f(x) \;=\; \text{if $x = 0$ then $0$ else $f(f(x-1))$}
\]
This function is total, always returns zero, and only ever recurses on
strictly smaller arguments. Yet it is not syntactically
structurally recursive, as it recurses on the argument $f(x-1)$.

We can view this definition as being the fixed point of an operator $T$:
\begin{align*}
T &: (ℕ → ℕ) → (ℕ → ℕ)\\ 
T f &= λx.\; \text{if $x = 0$ then $0$ else $f(f(x-1))$}
\end{align*}

We give the set $ℕ → ℕ$ the structure of a COFE by saying that
functions are approximately equal when they agree on a prefix:
\[
f ≡_n g \quad\text{iff}\quad ∀k < n.\; f(k) = g(k)
\]

In this COFE, the operator $T$ is not contractive, so Banach's fixed
point theorem cannot be used to prove it has a unique fixed point.
However, we can show that it is contractive on fixed points, by first
proving the following lemma about partial fixed points of $T$:

\begin{lemma}
If $T(f) ≡_n f$, then $∀k < n.\; f(k) = 0$
\end{lemma}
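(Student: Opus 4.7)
The plan is to unfold the hypothesis and proceed by induction on $k$. Unpacking the definition of $\equiv_n$ on $\mathbb{N} \to \mathbb{N}$, the assumption $T(f) \equiv_n f$ says exactly that $T(f)(k) = f(k)$ for every $k < n$. So for each such $k$, I may rewrite $f(k)$ as $T(f)(k)$ and then expand using the definition of $T$.

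I would then prove $f(k) = 0$ by induction on $k$ (for $k < n$). In the base case $k = 0$, we have $f(0) = T(f)(0) = 0$ directly from the first branch of $T$. In the inductive step, for $k = k'+1 < n$, we compute
\[
f(k) \;=\; T(f)(k) \;=\; f(f(k-1)).
\]
Since $k-1 < k < n$, the induction hypothesis yields $f(k-1) = 0$, so $f(k) = f(0)$, and the base case gives $f(0) = 0$.

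There is no real obstacle here; the only subtlety is that the nested self-application $f(f(k-1))$ requires two uses of what we have already established, namely the inductive hypothesis at $k-1$ to reduce the inner call to $0$, and then the base case to evaluate $f(0) = 0$. Both are available because our induction ranges over all $k$ strictly below $n$ simultaneously, so the base case is in scope when the inductive step needs it.
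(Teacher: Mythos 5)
Your proof is correct and matches the paper's argument in substance: the paper simply says ``induction on $n$,'' and the content of that induction is exactly the computation you carry out --- unfold $f(k) = T(f)(k)$, reduce the inner call $f(k-1)$ to $0$ by the inductive hypothesis, and finish with $f(0)=0$. The only cosmetic difference is that you induct on the argument $k$ (with the base case available in the step, as you note) rather than on the index $n$; either works because $T(f) \equiv_n f$ pins down $f$ pointwise on all of $\{0,\dots,n-1\}$ at once.
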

\begin{proof}Induction on $n$.\end{proof}

The operator $T$ is then shown to be contractive on fixed points,
since any partial fixed points of $T$ must satisfy the condition of
the above lemma. By a further application of the above lemma, this
unique fixed point is equal to $λx.\; 0$.

\section{Notes}

\Cref{theorem} is derived from a result of Fisher~\cite{fisher}, which
states that a function $f$ on a compact metric space has a unique fixed point if:
\[
d(f(x), f(y)) < \frac{1}{2}(d(x,f(y)) + d(f(x), y))
\]
\Cref{theorem} is not directly implied by this result, but is a sort
of ultrametric version of it, and can be proved by going through the
steps of Fisher's proof. However, the actual proof in \cref{sec:thm}
differs slightly from Fisher's, both in order to be constructive (Fisher's
proof does not directly show that the fixed point can be found by
iterating $f$), and because the bisected nature of COFEs removes the
need for compactness.

The theorem and the example of \cref{sec:example} have been formalised
in Coq, and the development is available from:

\begin{center}
\href{https://github.com/stedolan/cofe-fixpoints}{\tt https://github.com/stedolan/cofe-fixpoints}
\end{center}

\bibliographystyle{ACM-Reference-Format}
\bibliography{references}

\end{document}